\newtheorem{definition}{Definition}
\newtheorem{corollary}{Corollary}[section]
\newtheorem{theorem}{Theorem}[section]
\newtheorem{lemma}{Lemma}[section]
\newtheorem{remark}{Remark}[section]
\newtheorem{example}{Example}[section]
\numberwithin{equation}{section}
\begin{document}
\setcounter{page}{1}

	\thispagestyle{empty}
	\markboth{}{}

	\pagestyle{myheadings}
	\markboth{}{ }
	
	\date{}

	\noindent  
	
	\vspace{.1in}
	
	{\baselineskip 25truept
		
		\begin{center}
			{\huge {\bf A goodness-of-fit test for testing exponentiality based on normalized dynamic survival extropy}}
            \footnote{\noindent{\bf $^{\#}$} E-mail: nitin.gupta@maths.iitkgp.ac.in,\\
				{\bf * }  corresponding author E-mail: gauravk@kgpian.iitkgp.ac.in}\\
			
		\end{center}
		
		\vspace{.07in}
		
		\begin{center}
			{\large {\bf Gaurav Kandpal$^{*}$ and Nitin Gupta $^{\#}$}}\\
			{\large {\it Department of Mathematics, Indian Institute of Technology Kharagpur, West Bengal 721302, India }}
			\\
		\end{center}
	}
 
\abstract{
The cumulative residual extropy (CRJ) is a measure of uncertainty that serves as an alternative to extropy. It replaces the probability density function with the survival function in the expression of extropy. This work introduces a new concept called normalized dynamic survival extropy (NDSE), a dynamic variation of CRJ. We observe that NDSE is equivalent to CRJ of the random variable of interest $X_{[t]}$ in the age replacement model at a fixed time $t$. Additionally, we have demonstrated that NDSE remains constant exclusively for exponential distribution at any time. We categorize two classes, INDSE and DNDSE, based on their increasing and decreasing NDSE values. Next, we present a non-parametric test to assess whether a distribution follows an exponential pattern against INDSE. We derive the exact and asymptotic distribution for the test statistic $\widehat{\Delta}^*$. Additionally, a test for asymptotic behavior is presented in the paper for right censoring data. Finally, we determine the critical values and power of our exact test through simulation. The simulation demonstrates that the suggested test is easy to compute and has significant statistical power, even with small sample sizes. We also conduct a power comparison analysis among other tests, which shows better power for the proposed test against other alternatives mentioned in this paper. Some numerical real-life examples validating the test are also included.}\\

\textbf{Keywords :} Cumulative residual extropy; exponential distribution; age replacement model; U-statistic.\\

\textbf{Mathematics Subject Classification :} {\it 62B10, 62G10, 62G20, 94A17}.
\section{Introduction}
\cite{lad2015extropy} offered a resolution to the axiomatization of information theories based on entropy, which had been a persistent concern since \cite{shannon1948mathematical} first proposed it and was later followed by \cite{jaynes1957information}. \cite{lad2015extropy} demonstrated that Shannon's entropy function possesses a complementary dual function known as "extropy". Let $X$ be a non-negative random variable with the probability density function $f(x)$, the cumulative distribution function $F(x)$ and the survival function $\bar F(x)$. The extropy of $X$ is defined as
\begin{equation}
    J(X)=-\frac{1}{2}\int_{0}^{\infty} f^2(x)\mathrm{d}x.
\end{equation}
In addition, they examined the extropy function for densities and demonstrated that relative extropy is a counterpart to the Kullback-Leibler divergence, commonly acknowledged as a measure of continuous entropy. One can refer to \cite{shannon1948mathematical}, \cite{jaynes1957information} and \cite{lad2015extropy} to study more about it. \cite{jahanshahi2020cumulative} introduced CRJ of random variable $X$ as 
\begin{equation}
    \xi J(X)=-\frac{1}{2}\int_{0}^{\infty}\bar{F}^2(x)\mathrm{d}x.
\end{equation}

They showed that if for some $p>0.5$, $\mathbb{E}(X^p)<\infty$, then $\xi J(X)\in (-\infty,0]$. They studied the estimation and applications of $\xi J(X)$. They also provided some inequalities involving extropy and entropy.

\cite{abdul2021dynamic} proposed an alternative approach to assess the remaining uncertainty in lifetime random variables. They introduce a dynamic form of CRJ, which they term dynamic survival extropy. This measure is defined by
\begin{equation}
    J_s(X; t) = -\frac{1}{2}\int_t^\infty \left(\frac{\bar{F}(x) }{\bar{F}(t)}\right)^2 \, \mathrm{d}x.
\end{equation}
The dynamic survival extropy of $X$ is, in fact, the CRJ of the random variable $[X-t|X>t]$. One can refer to \cite{abdul2021dynamic} to know more about this. \cite{nair2020dynamic} proposed a dynamic cumulative extropy for the past lifetime also, defined by
\begin{equation}
    \bar{J}_s(X; t) = -\frac{1}{2}\int_0^t \left(\frac{{F}(x) }{{F}(t)}\right)^2 \, \mathrm{d}x.
\end{equation}
By analogy to \cite{abdul2021dynamic} and \cite{nair2020dynamic}, we proposed the following definition of the normalized dynamic survival extropy (NDSE).
\begin{definition}
    Let $X$ be a non-negative random variable with survival function $\bar{F}$, then NDSE is defined as
\begin{align}
    \eta(t)=-\frac{1}{2}\frac{\int_{0}^{t} \bar{F}^2(u)\mathrm{d}u}{1-\bar{F}^2(t)},
\end{align}
which is a normalized dynamic variation of CRJ.
\end{definition}

NDSE relates to the reliability theory of a well-known age replacement model. Age replacement is a frequently employed preventive maintenance strategy aimed at averting the failure of an item during operation, which can often be expensive or hazardous. The age replacement policy involves replacing an object with a new one either when it fails or when it reaches a predetermined age $t$, whichever happens first. Let $F$ denotes the probability distribution of the lifetime of a newly introduced item and $X_{[t]}$ represents the associated random variable of interest, following the age replacement strategy with a replacement time of $t$. \cite{barlow1996mathematical} provided the survival function of $X_{[t]}$, denoted by $R_t$, as
\begin{equation*}
   R_t(x)= \sum_{n=0}^{\infty}\bar{F}^n(t)\bar{F}(x-nt)I_{[nt,(n+1)t]}(x),\,\,\,\,x>0,
    \label{survival_Age}
\end{equation*}
where 
\begin{eqnarray*}
     I_{[a,b]}(x) &=& \left\{\begin{array}{ccc}
                   1, && x\in [a,b] \\ \\
                   0, && \mbox{otherwise}.
             \end{array} \right.
\end{eqnarray*}
We have
\begin{align}
   \xi J(R_{t})&=-\frac{1}{2}\int_{0}^{\infty} R_{t}^2(x)\mathrm{d}x\nonumber\\
   &=-\frac{1}{2}\frac{\int_{0}^{t} \bar{F}^2(u)\mathrm{d}u}{1-\bar{F}^2(t)}\\
 &=\eta(t).\nonumber
\end{align}
Hence, the CRJ for $X_{[t]}$ is the same as NDSE. If the average waiting time between consecutive failures is a crucial criterion in deciding whether we should opt for an age replacement policy, then a reasonable way to decide would be to test whether the distribution of the given system is exponential. Some work in this direction has been done by \cite{kayid2013further} and \cite{kattumannil2019simple} using mean time to failure (MTTF).

In the past few years, many other researchers also worked to provide suitable goodness-of-fit test for testing exponentiality. We refer to \cite{zardasht2015empirical}, \cite{baratpour2012testing}, \cite{choi2004goodness}, \cite{finkelstein1971improved} and \cite{soest1969some}, who worked in this direction and proposed their tests for testing exponentiality against various alternatives. Motivated by this work, we developed a simple non-parametric test for testing exponentiality using some behavioural properties of NDSE with respect to time. We propose a test with the idea that it should be easy to compute and have high power.

In this paper, first, we investigate the behavior of NDSE with respect to $t$ and we show that NDSE is constant only if $X$ follows an exponential distribution, it is given as Theorem \ref{NDSE:constant} in this paper. We define two classes, INDSE and DNDSE, as per the increasing and decreasing NDSE with respect to $t$. We include some interesting examples that are in INDSE and DNDSE based on the choice of their shape parameter. Next, we provide a result which states that $F$ is INDSE (or DNDSE) based on the negative (or positive) value of $\delta(x)$, which is defined in \eqref{small:delta}. Using this lemma, we further define a measure of departure for our proposed goodness-of-fit test for testing exponentiality.

The rest of the paper is organized as follows: In Section \ref{exact_test}, we proposed an exact test to investigate the exponentiality of a random sample. We define a measure of departure using $\delta(x)$, which gave our test statistic. Next, based on U-statistic, we provided a scale-invariant test statistic $\widehat{\Delta}^*$. We also provide the exact distribution of $\widehat{\Delta}^*$ in this section. Finally, we included a table containing the critical values for different sample sizes and for $99\%$ and $95\%$ confidence levels. Section \ref{Asymptotic:properties} contains asymptotic properties of $\widehat{\Delta}^*$ and an asymptotic representation of our proposed test statistic. In Section \ref{Censored:section}, we proposed a modified test statistic in the case of censored observation. We consider only the case of right-censored observations. Similar results can also be proved for left-censored. In Section \ref{Simulation:section}, we reported the simulation's result to investigate the asymptotic test's performance. Finally, we illustrated our test procedure using a real data set.
\begin{theorem}
\label{NDSE:constant}
    The NDSE is constant if and only if the distribution of $X$ is exponential distribution.
\end{theorem}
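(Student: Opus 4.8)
The ``if'' direction is a direct computation. If $X$ is exponential with rate $\lambda>0$, then $\bar F(u)=e^{-\lambda u}$, so $\int_0^t \bar F^2(u)\,\mathrm{d}u=(1-e^{-2\lambda t})/(2\lambda)$ while $1-\bar F^2(t)=1-e^{-2\lambda t}$, and the ratio collapses to the constant $\eta(t)=-\tfrac{1}{4\lambda}$ for every $t>0$. So the work is entirely in the converse.

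For the ``only if'' direction, suppose $\eta(t)\equiv c$ on the support of $X$. Since the numerator $-\tfrac12\int_0^t\bar F^2$ is strictly negative and the denominator $1-\bar F^2(t)$ is strictly positive for $t>0$ (using $\bar F(0)=1$), the constant $c$ is negative; in particular $c\neq 0$. Clearing denominators gives the identity
\begin{equation*}
-\tfrac12\int_0^t \bar F^2(u)\,\mathrm{d}u \;=\; c\bigl(1-\bar F^2(t)\bigr),\qquad t\ge 0 .
\end{equation*}
Writing $G(t)=\bar F^2(t)$, this is the Volterra integral equation $G(t)=1+\tfrac{1}{2c}\int_0^t G(u)\,\mathrm{d}u$. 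The plan is to bootstrap regularity: the right-hand side is Lipschitz (hence continuous) because $G$ is bounded, so $G$ is continuous; then the right-hand side is $C^1$, so $G$ is $C^1$; differentiating yields $G'(t)=\tfrac{1}{2c}G(t)$ with $G(0)=1$, whose unique solution is $G(t)=e^{t/(2c)}$. Taking the non-negative square root gives $\bar F(t)=e^{t/(4c)}$, and setting $\lambda=-1/(4c)>0$ identifies $F$ as the exponential distribution with rate $\lambda$. (Equivalently, one can differentiate the identity directly to get $\bar F'(t)=\tfrac{1}{4c}\bar F(t)$ after cancelling $\bar F(t)>0$, then solve the linear ODE with $\bar F(0)=1$.)

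The only genuine subtlety—and the step I would be most careful about—is justifying that $\bar F$ may be differentiated, since a priori $\bar F$ is only right-continuous and monotone. The Volterra/bootstrap argument above handles this cleanly without any smoothness assumption: the integral identity forces $\bar F^2$, and hence $\bar F$, to be absolutely continuous and then $C^1$, after which the ODE argument is routine. I would also remark in passing that the degenerate value $t=0$ gives $0/0$ and carries no information, and that the derived form $\bar F(t)=e^{t/(4c)}$ automatically has unbounded support, consistent with the exponential conclusion.
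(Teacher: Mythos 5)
Your proof is correct and follows essentially the same route as the paper: clear the denominator, differentiate the resulting integral identity to obtain a first-order linear ODE for $g(t)=\bar F^2(t)$ with $g(0)=1$, solve it, and take the non-negative square root, with the ``if'' direction handled by the same direct computation. The only difference is that you justify the differentiation step via a Volterra-type bootstrap (forcing $\bar F^2$ to be continuous and then $C^1$), a regularity point the paper passes over silently; this adds rigor but is not a different argument.
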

\begin{proof}
    \,\,Let us assume that NDSE $\eta(t)$ is equal to a constant $\lambda^*$. First, we show that the distribution of $X$ is exponential. Consider
    \begin{align} \lambda^*&=-\frac{1}{2}\frac{\int_{0}^{t} \bar{F}^2(u)\mathrm{d}u}{1-\bar{F}^2(t)}\nonumber\\\int_{0}^{t} \bar{F}^2(u)du&=\lambda\left (1-\bar{F}^2(t)  \right ),\label{CH_Constant_EQ[1]}\end{align}
    where $\lambda=-2\lambda^*$. Now, we solve this integral equation by differentiating both sides of the equation \eqref{CH_Constant_EQ[1]} and we get
    \begin{equation}
        \bar{F}^2(t)+\lambda\mathrm{d}\left (\bar{F}^2(t)  \right )=0,
    \end{equation}
    with $\bar{F}(0)=1$. Take $g(t)=\bar{F}^2(t)$ then we get $g(0)=1$ and an linear differential equation 
    \begin{equation}
        g(t)+\lambda g'(t)=0.
    \end{equation}
    The solution of the differential equation is 
    \begin{equation*}
        g(t)=e^{-\frac{1}{\lambda}t},
    \end{equation*}
    and then
    \begin{equation*}
        \bar{F}(t)=\pm e^{-\frac{1}{2\lambda}t}.
    \end{equation*}
    Now, using the fact that $\bar{F}(t)\geq0$, we get a unique solution of the differential equation as 
    \begin{equation}
         \bar{F}(t)= e^{-\frac{1}{2\lambda}t}.
    \end{equation}
    Next, to prove that, if $X$ follows exponential distribution with some parameter $\beta$ that is $\bar{F}(t)=e^{-\beta t}$, then $\eta(t)$ is constant. Consider 
    \begin{align*}\eta(t)&=-\frac{1}{2}\frac{\int_{0}^{t}e^{-2\beta x}\mathrm{d}x}{1-e^{-2\beta t}}\\&=\frac{1}{4\beta}\left (\frac{e^{-2\beta t}-1}{1-e^{-2\beta t}}   \right )\\&=-\frac{1}{4\beta}\end{align*}
    Hence, the proof is complete.
\end{proof}

\begin{definition}
    A random variable $X$ belongs to the increasing normalized dynamic survival extropy (INDSE) or decreasing normalized dynamic survival extropy (DNDSE) classes if the NDSE is increasing or decreasing. If $X$ follows the distribution $F(x)$, then we say $F$ is INDSE or DNDSE accordingly.
\end{definition}

Some well-known examples of INDSE are Weibull distribution, log-logistic distribution, and gamma distribution, which have a shape parameter greater than one and a positive scale parameter. The Rayleigh distribution is also INDSE. Similarly,  Weibull distribution, log-logistic distribution, and gamma distribution with a shape parameter of less than one and a positive scale parameter are DNDSE. Further, the following lemma provides a way to check whether $X$ belongs to INDSE or DNDSE. The following lemma can be easily proved using basic calculus and the definition of NDSE.
\begin{lemma}
    Let $X$ be a continuous non-negative random variable with distribution function $F(\cdot)$ and probability density function $f(\cdot)$, then $X$ belongs to INDSE (or DNDSE) class if and only if 
    \begin{equation}
    \label{small:delta}
        \delta(x)=\left ( 1-\bar{F}^2(x) \right )\bar{F}(x)-2f(x)\int_{0}^{x}\bar{F}^2(t)\mathrm{d}t< 0 (\text{or }> 0)
    \end{equation}
    for all $x>0$.
\end{lemma}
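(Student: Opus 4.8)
The plan is to translate the monotonicity of $\eta$ directly into a pointwise sign condition by differentiating the defining ratio. Write $\eta(t)=-\tfrac12\,N(t)/D(t)$ with $N(t)=\int_0^t\bar F^2(u)\,\mathrm{d}u$ and $D(t)=1-\bar F^2(t)$. For $t>0$ continuity of $F$ together with $F(0)=0$ gives $0<D(t)<1$ and $N(t)>0$, so the quotient is well defined and differentiable; moreover $N'(t)=\bar F^2(t)$ and, since $\bar F'(t)=-f(t)$, $D'(t)=2\bar F(t)f(t)$. First I would record these elementary facts, since they are what makes the quotient rule applicable.

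Next, applying the quotient rule and simplifying:
\begin{align*}
\eta'(t) &= -\frac{1}{2}\,\frac{N'(t)D(t)-N(t)D'(t)}{D^2(t)} \\
&= -\frac{1}{2D^2(t)}\left[\bar F^2(t)\bigl(1-\bar F^2(t)\bigr)-2\bar F(t)f(t)\int_0^t\bar F^2(u)\,\mathrm{d}u\right] \\
&= -\frac{\bar F(t)}{2D^2(t)}\,\delta(t),
\end{align*}
where in the last line I factor $\bar F(t)$ out of the bracket and recognize the remaining factor as $\delta(t)$ from \eqref{small:delta}. This identity is the crux: it shows $\eta'(t)$ is $\delta(t)$ multiplied by a factor whose sign is fixed.

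Since $\bar F(t)>0$ on the support and $D^2(t)>0$ for $t>0$, the prefactor $-\bar F(t)/\bigl(2D^2(t)\bigr)$ is strictly negative, so $\eta'(t)$ and $\delta(t)$ carry opposite signs at every $t>0$. Hence $\eta$ is increasing exactly when $\delta(x)<0$ for all $x>0$ (the INDSE case) and decreasing exactly when $\delta(x)>0$ for all $x>0$ (the DNDSE case). For the "if" direction this is immediate from $\eta'>0$ (resp.\ $<0$); for the "only if" direction I would invoke that a differentiable monotone function has derivative of the corresponding sign everywhere, using continuity of $\delta$ to pass from $\eta'\ge0$ to the stated strict inequality on $\delta$.

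The computation is routine; the only points needing care are the regularity remarks — that $D(t)>0$ and $\bar F(t)>0$ on the relevant range so that no division by zero occurs and the factorization is legitimate, and the precise reading of "increasing/decreasing" against the strict pointwise inequality on $\delta$ (an isolated zero of $\delta$ is compatible with strict monotonicity, so one either assumes $\delta$ does not vanish or works with the non-strict convention). None of these is a real obstacle, which is why the statement follows from basic calculus.
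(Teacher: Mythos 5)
Your computation is correct and is exactly the ``basic calculus'' argument the paper alludes to without writing out: the paper gives no proof beyond remarking that the lemma follows from the definition of NDSE, and your quotient-rule identity $\eta'(t)=-\bar F(t)\,\delta(t)/\bigl(2(1-\bar F^2(t))^2\bigr)$ is the natural way to fill that in. Your closing caveat about strict versus non-strict monotonicity (an isolated zero of $\delta$ being compatible with strict monotonicity of $\eta$) is a legitimate imprecision in the lemma as stated, not a gap in your argument.
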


\section{An exact test for exponentiality}
\label{exact_test}
Let $X_1,\,X_2,\,\cdots,\,X_n$ be a random sample from distribution $F$. We are interested to test the hypothesis,
\begin{align*}
    H_{0} &: F \text{ is exponential}\\
    H_{1} &: F \text{ is INDSE (not exponential)}.
\end{align*}
Define the measure of departure as
\begin{equation}
    \Delta(F)=\int_{0}^{\infty} \left (\left ( 1-\bar{F}^2(x) \right )\bar{F}(x)-2f(x)\int_{0}^{x}\bar{F}^2(t)\mathrm{d}t  \right )\mathrm{d}x.
\end{equation}
This implies, $H_0 :\ \Delta(F)=0$ and $H_1:\ \Delta(F)<0$. Using Fubini's theorem, we get
\begin{align*} \Delta(F)
&=\int_{0}^{\infty}\bar{F}(x)\mathrm{d}x-\int_{0}^{\infty} \bar{F}^3(x)\mathrm{d}x-2\int_{0}^{\infty} \bar{F}^3(x)\mathrm{d}x\\
&=\int_{0}^{\infty}\bar{F}(x)\mathrm{d}x-3\int_{0}^{\infty} \bar{F}^3(x)\mathrm{d}x\\
&=\mu-3\mathbb{E}[\min{(X_1,X_2,X_3)}].
\end{align*}

\subsection{Test statistic}\label{subsec1}
A $U$-statistic based estimator of $\Delta(F)$ is given by
\begin{equation}
    \widehat{\Delta}=\binom{n}{3}^{-1}\sum_{1\leq i<j<k\leq n}h(X_i,X_j,X_k),
    \label{Ustat[1]}
\end{equation}
where,
\begin{align*}h(X_i,X_j,X_k)=\frac{1}{3}\left( X_1+X_2+X_3-9X_1I(X_1<\min{(X_2,X_3)}) -9X_2I(X_2<\min{(X_1,X_3)})\right.\\ \left.
-9X_3I(X_3<\min{(X_1,X_2)}) \right ).
\end{align*}
To make the test scale-invariant, we consider the departure measure 
\begin{equation}
   \Delta^{*}(F)=\frac{\Delta(F)}{\mu}
\end{equation}
and the test statistic is given by
\begin{equation}
    \widehat{\Delta}^{*}=\frac{\Delta(F)}{\bar{X}},
\end{equation}
where $\bar{X}=\frac{1}{n}\sum_{i=1}^{n}X_i$. Hence, the test procedure is to reject the null hypothesis $H_{0}$ in favour of the alternative hypothesis $H_1$ for smaller values of $\widehat{\Delta}^{*}$.
\begin{remark}
A similar result can be proved by considering the alternate hypothesis, as $F$ is DNDSE.
\end{remark}
\subsection{Distribution of test statistic}
We use a result due to \cite{box1954some} to find the exact null distribution of the test statistic $\widehat{\Delta}^{*}$.
\begin{theorem}
    Let $X$ be continuous non-negative random variable with $\bar{F}(x)=e^{-\frac{x}{2}}$. Let $X_1,\,X_2,\cdots,\,X_n$ be independent and identical samples from $F$. Then, for fixed $n$
    \begin{equation}
        P(\widehat{\Delta}^{*}>x)=\sum_{j=1}^{n-2}\prod_{i=1,i\ne j}^{n-1}\left ( \frac{d_{j,n}-x}{d_{j,n}-d_{i,n}} \right )I_{d_{j,n}}(x)+(1+\mu_1)\prod_{i=1}^{n-2}\left ( \frac{d_{n-1,n}-x}{d_{n-1,n}-d_{i,n}} \right )I_{d_{n-1,n}}(x)
    \end{equation}
   provided $d_{i,n}\ne d_{j,n}$ for $i\ne j$, where
    \begin{equation}
        I_y(x)=\left\{\begin{matrix}
1 &\text{ if }x\leq y&\\ 
0 &\text{ if }x> y
\end{matrix}\right.\text{and }\mu_1=\sum_{i=1}^{n-2}\left ( \frac{x-d_{i,n}}{d_{n-1,n}-d_{i,n}} \right ),
\end{equation}
and
\begin{equation}
    d_{i,n}=\frac{1}{(n-1)(n-2)}\left ( (3i-3n+6)(n-i+1)+n^2-3n-4 \right )\text{ for }i=1,2,\cdots, n-1.
\end{equation}
\end{theorem}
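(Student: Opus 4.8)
The plan is to write $\widehat{\Delta}^{*}$ explicitly in terms of the order statistics of the sample, reduce it under $H_0$ to the ratio of a signed linear combination of independent chi-square variables to their sum, and then apply the result of \cite{box1954some}. Since for a continuous sample exactly one of the three indicators in $h$ equals $1$ almost surely, $h(X_1,X_2,X_3)=\tfrac13(X_1+X_2+X_3)-3\min(X_1,X_2,X_3)$. Averaging over all triples, the linear part contributes $\bar X$ (each observation lies in $\binom{n-1}{2}$ triples and $\binom{n-1}{2}/\binom{n}{3}=3/n$), while $X_{(r)}$ is the minimum of exactly $\binom{n-r}{2}$ triples of ranks; hence
\begin{equation*}
\widehat{\Delta}=\bar X-3\binom{n}{3}^{-1}\sum_{r=1}^{n-2}\binom{n-r}{2}X_{(r)},\qquad
\widehat{\Delta}^{*}=1-\frac{3\binom{n}{3}^{-1}\sum_{r=1}^{n-2}\binom{n-r}{2}X_{(r)}}{n^{-1}\sum_{r=1}^{n}X_{(r)}}.
\end{equation*}

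Under $H_0$ with $\bar F(x)=e^{-x/2}$ I would then use Rényi's representation: the normalized spacings $D_i=(n-i+1)(X_{(i)}-X_{(i-1)})$, $i=1,\dots,n$ (with $X_{(0)}=0$), are i.i.d., each distributed as $\chi^2_2$, and $X_{(r)}=\sum_{i=1}^{r}D_i/(n-i+1)$. Substituting, interchanging the order of summation, and using $\sum_{r=i}^{n-2}\binom{n-r}{2}=\binom{n-i+1}{3}$ together with $\binom{n-i+1}{3}/(n-i+1)=\tfrac16(n-i)(n-i-1)$, the expression collapses to
\begin{equation*}
\widehat{\Delta}^{*}=\frac{\sum_{i=1}^{n}d_{i,n}\,D_i}{\sum_{i=1}^{n}D_i},\qquad
d_{i,n}=1-\frac{3(n-i)(n-i-1)}{(n-1)(n-2)},
\end{equation*}
and expanding $d_{i,n}$ reproduces exactly the coefficients in the statement for $i\le n-1$, with the two last weights coinciding, $d_{n-1,n}=d_{n,n}=1$. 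Since $\widehat{\Delta}^{*}$ is homogeneous of degree $0$ in the $D_i$, their common scale is irrelevant; and because $k\mapsto k(k-1)$ is injective on $\{0,1,2,\dots\}$ except that $k=0$ and $k=1$ both give $0$, the numbers $d_{1,n}<d_{2,n}<\dots<d_{n-1,n}$ are automatically distinct (ranging over $[-2,1]$), so the stated proviso always holds.

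For the distribution, note
\begin{equation*}
P\!\left(\widehat{\Delta}^{*}>x\right)=P\!\left(\sum_{i=1}^{n}(d_{i,n}-x)D_i>0\right)=P\!\left(\sum_{i=1}^{n-2}(d_{i,n}-x)\chi^2_{2}+(1-x)\chi^2_{4}>0\right),
\end{equation*}
where the $n-1$ chi-square variables are independent and the $\chi^2_4$ comes from merging the two equal weights at $1$. This is exactly the setting of \cite{box1954some}: the distribution function of a signed combination $\sum_j\lambda_j\chi^2_{\nu_j}$ is read off from the partial-fraction expansion of its Laplace transform $\prod_j(1+t\lambda_j)^{-\nu_j/2}$. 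The $n-2$ terms with $\nu_j=2$ give simple poles, each contributing $\prod_{i\ne j}\tfrac{d_{j,n}-x}{d_{j,n}-d_{i,n}}$ and present only when $d_{j,n}-x\ge 0$, i.e.\ weighted by $I_{d_{j,n}}(x)$; the single $\nu=4$ term, located at $d_{n-1,n}=1$, gives a pole of order $2$ whose residue is a confluent (Hermite-type) expression producing the extra linear factor $1+\mu_1$ with $\mu_1=\sum_{i=1}^{n-2}\tfrac{x-d_{i,n}}{d_{n-1,n}-d_{i,n}}$. Assembling the contributions yields the displayed formula.

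The main obstacle is this last step: carrying out Box's partial-fraction computation carefully in the presence of the order-$2$ pole at $d_{n-1,n}$ --- equivalently, a confluent divided difference with a double node --- and tracking which poles sit on the positive side of $x$ so that the indicators $I_{d_{j,n}}(x)$ correctly delineate the piecewise-polynomial branches on the support $[d_{1,n},d_{n-1,n}]$. As a consistency check one should verify the extremes: for $x\ge d_{n-1,n}$ every indicator vanishes and $P(\widehat{\Delta}^{*}>x)=0$, while for $x\le d_{1,n}$ the right-hand side must equal $1$, which is a Lagrange/Hermite partition-of-unity identity among the product terms.
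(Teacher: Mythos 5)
Your proposal is correct and follows essentially the same route as the paper: rewrite $\widehat{\Delta}^{*}$ as $\sum_i d_{i,n}D_i/\sum_i D_i$ in terms of the normalized spacings, observe that under $H_0$ these are i.i.d.\ $\chi^2_2$ with $d_{n-1,n}=d_{n,n}$ so the last two merge into a $\chi^2_4$, and invoke Theorem 2.4 of Box (1954) for the resulting linear combination of chi-squares. Your combinatorial derivation of the spacing coefficients (counting minima via $\binom{n-r}{2}$ and the hockey-stick identity) is a cleaner presentation of the paper's ``algebraic manipulation,'' and your closed form $d_{i,n}=1-3(n-i)(n-i-1)/((n-1)(n-2))$ agrees with the paper's expression.
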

\begin{proof}
 We express the test statistic in terms of order statistics. Let $X_{(i)}$, $i=1,2,\cdots n$ be the $i$-th order statistic based on the random sample $X_1,\,X_2,\,\cdots,\,X_n$ from $F$. After some algebraic manipulation the Eq. \eqref{Ustat[1]} can be express as 
     \begin{align}\widehat{\Delta}&=\frac{1}{n(n-1)(n-2)}\sum_{i=1}^{n}\left ((n-1)(n-2)-9(n-i-1)(n-i)  \right )X_{(i)}\nonumber\\&=\frac{1}{n(n-1)(n-2)}\sum_{i=1}^{n}\left ( -8n^2+6n+18ni-9i-9i^2+2 \right )X_{(i)}\label{Dist_EQ[1]}\end{align}
     Rewrite the equation \eqref{Dist_EQ[1]} as
     \begin{align*}\widehat{\Delta}&=\frac{1}{n(n-1)(n-2)}\sum_{i=1}^{n}\left ( (3i-3n+6)(n-i+1)^2-\left (3(i+1)-3n+6  \right )(n-i+1)^2+n^2-3n-4 \right )X_{(i)}\\&=\frac{1}{n(n-1)(n-2)}\sum_{i=1}^{n}(3i-3n+6)(n-i+1)^2\left ( X_{(i)}-X_{(i-1)} \right )+(n^2-3n-4 )X_{(i)}\end{align*}
     Let $D_i$ denotes normalised spacing, that is, $D_i=(n-i+1)\left ( X_{(i)}-X_{(i-1)} \right )$, with $X(0)=0$, then we write
     \begin{equation}
         \widehat{\Delta}=\frac{1}{n(n-1)(n-2)}\sum_{i=1}^{n}\left ((3i-3n+6)(n-i+1)+(n^2-3n-4 )  \right )D_{i}
     \end{equation}
     Hence, we obtain
     \begin{equation}
         \widehat{\Delta}^*=\frac{\sum_{i=1}^{n}d_{i,n}D_{i}}{\sum_{i=1}^{n}D_{i}}
     \end{equation}
     where,
     \begin{equation}
         d_{i,n}=\frac{1}{(n-1)(n-2)}\left ( (3i-3n+6)(n-i+1)+n^2-3n-4 \right )\text{ for }i=1,2,\cdots, n.\label{d_i_n[1]}
     \end{equation}
     We observe that $d_{n-1,n}=d_{n,n}$ and for all $i\ne j$, $d_{i,n}\ne d_{j,n}$, $i,j\in \{1,2,3,\cdots,n-1\}$, next we define
     \begin{equation}
         G_i=\left\{\begin{matrix}
D_i, &i=1,2,\cdots,n-2 \\ 
D_{n-1}+D_n, & i=n-1.
\end{matrix}\right.
     \end{equation}
     So, we get 
     \begin{equation}
         \widehat{\Delta}^*=\frac{\sum_{i=1}^{n-1}d_{i,n}G_{i}}{\sum_{i=1}^{n-1}G_{i}},
     \end{equation}
     where $d_{i,n}$ is same as in Eq. \eqref{d_i_n[1]}. We know that the exponential distribution with rate $\frac{1}{2}$ is distributed same as $\chi^2$ random variable with 2 degrees of freedom. Hence, $G_1,G_2,\cdots G_{n-2}$ follows $\chi^2$ distribution with 2 degrees of freedom and $G_{n-1}$ follows $\chi^2$ distribution with 4 degrees of freedom. Therefore, the result follows from Theorem 2.4 of \cite{box1954some}.
\end{proof}

The critical values of the exact test can be seen in Table \ref{Critical_Values:Table}. The critical values are determined using simulation. The entire simulation in this paper is in R programming. We generate random samples from the exponential distribution using the cumulative distribution function $F(x)=1-\exp(-x),\ x\geq 0$. Since our proposed test is scale-invariant, we can take the scale parameter as unity while executing the simulation. We simulate 100000 samples of different sizes to find critical values for $99\%$ and $95\%$ confidence. We observe from the table that the critical values are increasing to zero as we increase the sample sizes $n$.
\begin{table}[t]
    \centering
    \renewcommand{\arraystretch}{1.5}
    \small
    \caption{The critical values of the exact test  at the significance level $\alpha = 0.01$ and $\alpha = 0.05$}
\vspace{15pt}
    \begin{tabular}{m{1.5cm}  m{2cm} m{2cm} | m{1.5cm} m{2cm} m{2cm}}
        \hline
        \textbf{$n$} & \textbf{$\alpha=0.01$} & \textbf{$\alpha=0.05$}&\textbf{$n$} & \textbf{$\alpha=0.01$} & \textbf{$\alpha=0.05$}\\
        \hline
  3 & $-1.71190$  & $-1.32316$ & 12 & $-0.70742$  & $-0.47744$\\
    4 & $-1.43511$  & $-1.03571$ & 13 & $-0.66480$  & $-0.45728$\\
    5 & $-1.23650$  & $-0.85287$ &14 & $-0.64394$  & $-0.43561$ \\
    6 & $-1.08402$  & $-0.75092$ & 15 & $-0.61046$  & $-0.42014$ \\
    7 & $-0.98660$  & $-0.66664$ &16 & $-0.59501$  & $-0.40037$ \\
    8 & $-0.90474$  & $-0.61714$ &17 & $-0.56900$  & $-0.38924$\\
    9 & $-0.84119$  & $-0.57126$ &18 & $-0.55223$  & $-0.37855$\\
    10 & $-0.78409$  & $-0.53411$ &19 & $-0.53952$  & $-0.36901$\\
    11 & $-0.74789$  & $-0.50425$ &20 & $-0.52161$  & $-0.35935$ \\
   \hline
    \end{tabular}
    \label{Critical_Values:Table}
\end{table}

\section{Asymptotic properties}
\label{Asymptotic:properties}
In this section, we examine the asymptotic characteristics of the suggested test statistic. The test statistic is asymptotically normal and consistent with the alternatives. The test statistic's null variance is demonstrated to be independent of the parameter.
\subsection{Consistency and asymptotic normality}
As the proposed test is based on U-statistic, we use the asymptotic theory of U-statistic to discuss the limiting behaviour of $\widehat{\Delta}^{*}$. 
\begin{theorem}
    The $\widehat{\Delta}^*$ is a consistent estimator of $\Delta^*(F)$.
\end{theorem}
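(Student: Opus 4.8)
The plan is to combine the strong law of large numbers for $U$-statistics with the ordinary strong law of large numbers, and then pass to the ratio via the continuous mapping theorem. First I would record that, by construction, $\widehat{\Delta}$ is a $U$-statistic with symmetric kernel $h$ of degree $3$ whose expectation under any $F$ with finite mean equals $\Delta(F)=\mu-3\mathbb{E}[\min(X_1,X_2,X_3)]$; this unbiasedness is exactly the Fubini identity already derived above, obtained by symmetrizing the three indicator terms. Since $h(X_i,X_j,X_k)$ is a finite linear combination of the terms $X_\ell$ and $X_\ell I(X_\ell<\min(\cdot))$, we have $\mathbb{E}|h(X_1,X_2,X_3)|\le C\,\mathbb{E}|X_1|<\infty$ whenever $\mu<\infty$. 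Hoeffding's strong law for $U$-statistics then applies and gives $\widehat{\Delta}\to\Delta(F)$ almost surely as $n\to\infty$, hence also in probability.

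Next, by the classical strong law of large numbers, $\bar X=n^{-1}\sum_{i=1}^n X_i\to\mu$ almost surely, and since $F$ is a non-degenerate non-negative lifetime distribution we have $0<\mu<\infty$. Writing $\widehat{\Delta}^*=\widehat{\Delta}/\bar X$ and $\Delta^*(F)=\Delta(F)/\mu$, the map $(u,v)\mapsto u/v$ is continuous at $(\Delta(F),\mu)$ because $\mu\neq 0$; the continuous mapping theorem (equivalently Slutsky's theorem) therefore yields $\widehat{\Delta}^*\to\Delta^*(F)$ almost surely, and in particular in probability. Thus $\widehat{\Delta}^*$ is a consistent estimator of $\Delta^*(F)$.

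The only mild technical point, and the place I would be most careful, is the moment condition: the whole argument needs $0<\mu<\infty$, so I would state explicitly at the outset that we assume $\mathbb{E}(X)<\infty$ (already implicit in the definition of $\Delta(F)$ and the natural hypothesis here), and note that non-degeneracy of $F$ keeps the denominator bounded away from $0$ in the limit. No further obstacle arises, since verifying $\mathbb{E}[h]=\Delta(F)$ is the short symmetrization computation mentioned above.
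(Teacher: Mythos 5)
Your argument is correct and is essentially the paper's own proof: consistency of the $U$-statistic $\widehat{\Delta}$ for $\Delta(F)$ (the paper cites Lehmann 1951 where you invoke Hoeffding's strong law), the law of large numbers for $\bar X$, and Slutsky's theorem for the ratio. If anything, your version is slightly cleaner, since the paper factors $\widehat{\Delta}^{*}=\frac{\widehat{\Delta}}{\Delta(F)}\cdot\frac{\Delta(F)}{\mu}\cdot\frac{\mu}{\bar X}$, which is ill-defined when $\Delta(F)=0$ (i.e.\ under $H_0$), whereas your direct use of the continuous mapping theorem on $(\widehat{\Delta},\bar X)\mapsto\widehat{\Delta}/\bar X$ only requires $\mu\neq 0$.
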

\begin{proof}
    We can write 
    \begin{equation*}
        \widehat{\Delta}^{*}=\frac{\widehat{\Delta}}{\Delta(F)}\cdot \frac{\Delta(F)}{\mu}\cdot \frac{\mu}{\bar{X}}.
    \end{equation*}
    \cite{lehmann1951consistency} showed that $\widehat{\Delta}$ is a consistent estimator of $\Delta(F)$ and since $\bar{X}$ is consistent estimator of $\mu$. Therefore, $\widehat{\Delta}^*$ is a consistent estimator of $\Delta^*(F)$.
\end{proof}
Next, we find the limiting distribution of test statistic.
\begin{theorem}
\label{theorem_[var]}
    The distribution of $\sqrt{n}\left ( \widehat{\Delta}-\Delta(F) \right )$, as $n\rightarrow \infty$, is Gaussian with mean zero and variance of $9\sigma_{1}^{2}$ , where $\sigma_{1}^{2}$ is asymptotic variance of $\widehat{\Delta}$ and is given by 
    \begin{equation}
        \sigma_{1}^{2}=\frac{1}{9}Var\left ( X-9X\bar{F}^2(X)-18\int_{0}^{\infty}\left ( \int_{0}^{\min{(X,z)}} y\mathrm{dy} \right )\mathrm{dF(z)} \right ).
    \end{equation}
    \end{theorem}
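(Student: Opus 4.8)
The approach is to treat $\widehat{\Delta}$ as a one-sample $U$-statistic of degree $m=3$ with the symmetric kernel $h$, which satisfies $\mathbb{E}_F[h(X_1,X_2,X_3)]=\Delta(F)$, and then to invoke the classical central limit theorem for $U$-statistics (Hoeffding). Since $h(X_1,X_2,X_3)=\tfrac{1}{3}(X_1+X_2+X_3)-3\min(X_1,X_2,X_3)$, the kernel is dominated by $X_1+X_2+X_3$, so $\mathbb{E}_F[h^2]<\infty$ as soon as $\mathbb{E}_F[X^2]<\infty$ --- a condition which is in any case necessary for the variance $\sigma_1^2$ in the statement to be finite, and which I will therefore assume. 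Under it, Hoeffding's theorem yields $\sqrt{n}\,(\widehat{\Delta}-\Delta(F))\xrightarrow{d} N(0,m^2\zeta_1)$ with $\zeta_1=\mathrm{Var}_F\big(h_1(X_1)\big)$, where $h_1(x)=\mathbb{E}_F[h(x,X_2,X_3)]-\Delta(F)$ is the first-order H\'ajek projection of the kernel; the factor $9$ is exactly $m^2$. It then only remains to identify $\zeta_1$ with $\sigma_1^2$.

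To compute $h_1$, write $h_1(x)=\tfrac{1}{3}(x+2\mu)-3\,\mathbb{E}_F[\min(x,X_2,X_3)]+\mathrm{const}$. The mean of the minimum can be evaluated directly, $\mathbb{E}_F[\min(x,X_2,X_3)]=\int_0^\infty P(\min(x,X_2,X_3)>t)\,\mathrm{d}t=\int_0^x \bar{F}^2(t)\,\mathrm{d}t$, or, by conditioning first on $X_3$ and then on $X_2$ and separating according to which of $x,X_2,X_3$ attains the minimum, as $\mathbb{E}_F[\min(x,X_2,X_3)]=x\bar{F}^2(x)+2\int_0^\infty\!\int_0^{\min(x,z)} y\,\mathrm{d}F(y)\,\mathrm{d}F(z)$ (the two agree by integration by parts, since $\int_0^x\bar{F}^2=x\bar{F}^2(x)+2\int_0^x t\bar{F}(t)\,\mathrm{d}F(t)$). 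Substituting the second form gives
\[
h_1(x)=\tfrac{1}{3}\Big(x-9x\bar{F}^2(x)-18\int_0^\infty\!\int_0^{\min(x,z)} y\,\mathrm{d}F(y)\,\mathrm{d}F(z)\Big)+\mathrm{const},
\]
and, since an additive constant does not change a variance,
\[
\zeta_1=\mathrm{Var}_F\big(h_1(X_1)\big)=\tfrac{1}{9}\,\mathrm{Var}_F\!\Big(X-9X\bar{F}^2(X)-18\int_0^\infty\!\int_0^{\min(X,z)} y\,\mathrm{d}F(y)\,\mathrm{d}F(z)\Big)=\sigma_1^2 .
\]
Hence $\sqrt{n}\,(\widehat{\Delta}-\Delta(F))$ is asymptotically $N(0,9\sigma_1^2)$, as claimed.

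I would close by checking the non-degeneracy that makes the limit a genuine Gaussian: one needs $\sigma_1^2>0$, i.e.\ $h_1$ not $F$-a.s.\ constant, which holds for every non-degenerate $F$ and, in particular, under the exponential null and under the INDSE alternatives of interest, so the first-order term is the one that survives in Hoeffding's decomposition. If one prefers a self-contained argument not quoting Hoeffding, the same conclusion follows from the H\'ajek/$H$-decomposition: bound the variance of the remainder to get $\sqrt{n}\,(\widehat{\Delta}-\Delta(F))=\tfrac{3}{\sqrt{n}}\sum_{i=1}^{n} h_1(X_i)+o_P(1)$ (the variance of a degree-$3$ $U$-statistic is $O(1/n)$ with leading term $9\zeta_1/n$), and apply the Lindeberg--L\'evy CLT to the i.i.d.\ sum. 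The main obstacle is purely computational --- the careful conditional-expectation bookkeeping in the evaluation of $h_1$, tracking the minima and the constants $9$ and $18$ --- together with the routine checks of the second-moment and non-degeneracy conditions; the limit theorem itself is off the shelf.
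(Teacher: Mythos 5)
Your proposal is correct and follows essentially the same route as the paper: invoke the Hoeffding CLT for the degree-3 $U$-statistic and identify the projection $h_1(x)=\mathbb{E}[h(x,X_2,X_3)]$, whose variance gives $\sigma_1^2$. The only differences are cosmetic and favorable to you --- you simplify the kernel to $\tfrac{1}{3}\sum X_i-3\min$, correctly read the paper's inner integral $\int_0^{\min(x,z)}y\,\mathrm{d}y$ as $\int_0^{\min(x,z)}y\,\mathrm{d}F(y)$ (the reading forced by the subsequent corollary), and make explicit the second-moment and non-degeneracy conditions that the paper leaves unstated.
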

    \begin{proof}
      Since the kernel has degree 3, using the central limit theorem for U-statistic, $\sqrt{n}\left ( \widehat{\Delta}-\Delta(F) \right )$ has a limiting distribution
        \begin{equation}
            N(0,9\sigma_1^2), \text{ as } \,n\rightarrow\infty,
        \end{equation}
        where $\sigma_1^2$ is specified in the theorem. For finding $\sigma_1^2$, consider 
        \begin{align*} 
            \mathbb{E}[h(x,X_{2},X_3)]&=\frac{1}{3}\mathbb{E}\left [ x+X_2+X_3-9xI(x<\min{(X_2,X_3)})-9X_2I(X_2<\min{(x,X_3)})\right.\\ &\left.-9X_3I(X_3<\min{(x,X_2)}) \right ]\\&=\frac{1}{3}\left ( x+2\mu-9x\mathbb{E}[I(x<\min{(X_2,X_3)})]-18\mathbb{E}[X_2I(X_2<\min{(x,X_3)})] \right )\\&=\frac{1}{3}
            \left ( x+2\mu-9x\bar{F}^2(x)-18\int_{0}^{\infty}\left ( \int_{0}^{\min{(x,z)}}y\mathrm{dy} \right )\mathrm{dF(z)} \right ).
        \end{align*}
    Hence,
    \begin{equation}
        \sigma_{1}^{2}=\frac{1}{9}Var\left ( X-9X\bar{F}^2(X)-18\int_{0}^{\infty}\left ( \int_{0}^{\min{(X,z)}} y\mathrm{dy} \right )\mathrm{dF(z)} \right ).
    \end{equation}
    which completes the proof.
\end{proof}
\begin{corollary}
    Let $X$ be continuous non-negative random variable with $\bar{F}(x)=e^{-\frac{x}{\lambda}}$, then $\sqrt{n}\, \widehat{\Delta}$, as $n\rightarrow \infty$, is Gaussian random variable with mean zero and variance $\sigma_{0}^{2}=\frac{4}{5}\lambda^2$.
\end{corollary}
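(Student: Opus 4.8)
The plan is to specialize Theorem~\ref{theorem_[var]} to the one-parameter exponential family $\bar F(x)=e^{-x/\lambda}$. That theorem already gives that $\sqrt{n}\big(\widehat{\Delta}-\Delta(F)\big)$ is asymptotically $N\!\left(0,\,9\sigma_1^2\right)$, so two things remain. First, under $H_0$ the centering vanishes: since $\Delta(F)=\mu-3\,\mathbb{E}[\min(X_1,X_2,X_3)]$ and $\min(X_1,X_2,X_3)$ is exponential with rate $3/\lambda$ while $\mu=\lambda$, we get $\Delta(F)=\lambda-3\cdot\tfrac{\lambda}{3}=0$, so $\sqrt{n}\,\widehat{\Delta}$ has the same limit as $\sqrt{n}\big(\widehat{\Delta}-\Delta(F)\big)$. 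Second, and this is the substantive part, one must show $9\sigma_1^2=\tfrac{4}{5}\lambda^2$ for the exponential law.

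For the second part I would start from the identity established inside the proof of Theorem~\ref{theorem_[var]}, namely $3\,\mathbb{E}[h(x,X_2,X_3)]=x+2\mu-9x\bar F^2(x)-18\,g(x)$ with $g(x)=\mathbb{E}\big[X_2\,I(X_2<\min(x,X_3))\big]$, together with the fact that $9\sigma_1^2=Var\big(3\,\mathbb{E}[h(X,X_2,X_3)]\big)$. Under exponentiality the term $g$ is elementary: $g(x)=\mathbb{E}[X_2\,I(X_2<x)\,I(X_2<X_3)]=\int_0^x y\,\bar F(y)f(y)\,\mathrm{d}y=\tfrac{1}{\lambda}\int_0^x y e^{-2y/\lambda}\,\mathrm{d}y=\tfrac{\lambda}{4}\big(1-e^{-2x/\lambda}\big)-\tfrac{x}{2}e^{-2x/\lambda}$. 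Substituting this together with $\bar F^2(x)=e^{-2x/\lambda}$ into $3\,\mathbb{E}[h(x,X_2,X_3)]$, the terms proportional to $x\,e^{-2x/\lambda}$ cancel and one is left with the clean expression $3\,\mathbb{E}[h(X,X_2,X_3)]=X+\tfrac{9\lambda}{2}e^{-2X/\lambda}+\text{const}$. This cancellation is the one step that is not purely mechanical, and it is where I would check the algebra most carefully; everything downstream is a short variance computation.

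It then remains to evaluate $Var\big(X+\tfrac{9\lambda}{2}e^{-2X/\lambda}\big)=Var(X)+\tfrac{81\lambda^2}{4}Var\big(e^{-2X/\lambda}\big)+9\lambda\,Cov\big(X,e^{-2X/\lambda}\big)$, using the Laplace-transform-type moments $\mathbb{E}[e^{-2X/\lambda}]=\tfrac13$, $\mathbb{E}[e^{-4X/\lambda}]=\tfrac15$, $\mathbb{E}[X e^{-2X/\lambda}]=\tfrac{\lambda}{9}$ and $Var(X)=\lambda^2$. These give $Var\big(e^{-2X/\lambda}\big)=\tfrac15-\tfrac19=\tfrac{4}{45}$ and $Cov\big(X,e^{-2X/\lambda}\big)=\tfrac{\lambda}{9}-\tfrac{\lambda}{3}=-\tfrac{2\lambda}{9}$, so $9\sigma_1^2=\lambda^2+\tfrac{81\lambda^2}{4}\cdot\tfrac{4}{45}-9\lambda\cdot\tfrac{2\lambda}{9}=\lambda^2+\tfrac{9\lambda^2}{5}-2\lambda^2=\tfrac{4}{5}\lambda^2$. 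Combining with $\Delta(F)=0$ shows that $\sqrt{n}\,\widehat{\Delta}$ is asymptotically Gaussian with mean zero and variance $\tfrac{4}{5}\lambda^2$, which is exactly the assertion of the corollary; in particular the limiting law is free of $\lambda$ after scaling, consistent with the scale-invariance of the test.
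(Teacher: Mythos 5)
Your proof is correct and follows essentially the same route as the paper: compute the conditional expectation $h_1$ under exponentiality, observe the cancellation of the $x e^{-2x/\lambda}$ terms to get $X+\tfrac{9\lambda}{2}e^{-2X/\lambda}+\text{const}$, and evaluate the resulting variance via $Var\left(e^{-2X/\lambda}\right)=\tfrac{4}{45}$ and $Cov\left(X,e^{-2X/\lambda}\right)=-\tfrac{2\lambda}{9}$, exactly as in the paper. Your explicit check that $\Delta(F)=0$ under $H_0$ is a small but welcome addition that the paper leaves implicit, and your probabilistic reading of the inner integral as $\mathbb{E}\left[X_2 I\!\left(X_2<\min(x,X_3)\right)\right]$ matches the value the paper actually uses.
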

\begin{proof}
    Under $H_0$, we have 
    \begin{equation}
        \int_{0}^{\infty}\left ( \int_{0}^{\min{(x,z)}} y\mathrm{dy} \right )\mathrm{dF(z)}=-\frac{1}{2}xe^{-\frac{2x}{\lambda}}-\frac{1}{4}\lambda e^{-\frac{2x}{\lambda}}+\frac{1}{4}\lambda.
    \end{equation}
    Now using Theorem \ref{theorem_[var]}, 
    \begin{align*}
        \sigma_0^2&=Var\left ( X+\frac{9}{2}\lambda e^{-\frac{2X}{\lambda}} -\frac{9}{2}\lambda\right )\\
        &=Var\left ( X+\frac{9}{2}\lambda e^{-\frac{2X}{\lambda}}\right )\\
        &=Var(X)+\frac{81\lambda^2}{4}Var(e^{-\frac{2X}{\lambda}})+9\lambda Cov(X,e^{-\frac{2X}{\lambda}})\\
        &=\lambda^2+\frac{9\lambda^2}{5}-2\lambda^2\\
        &=\frac{4}{5}\lambda^2.
    \end{align*}
\end{proof}
Now, using Slutsky's theorem, the following result can be obtained using the above corollary.
\begin{corollary}
\label{cor:assymtotic}
    Let $X$ be continuous non-negative random variable with $\bar{F}(x)=e^{-\frac{x}{\lambda}}$, then $\sqrt{n}\, \widehat{\Delta}^*$, as $n\rightarrow \infty$, is Gaussian random variable with mean zero and variance $\sigma_{0}^{2}=\frac{4}{5}$.
\end{corollary}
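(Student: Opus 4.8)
The plan is to obtain the result as an immediate consequence of the previous corollary together with Slutsky's theorem. First I would recall that the test statistic is the ratio $\widehat{\Delta}^{*}=\widehat{\Delta}/\bar X$, so that $\sqrt{n}\,\widehat{\Delta}^{*}=(\sqrt{n}\,\widehat{\Delta})/\bar X$. This reduces the problem to controlling the numerator and the denominator separately.

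Next I would verify that the limiting mean is zero by checking that $\Delta^{*}(F)=0$ under $H_{0}$. With $\bar F(x)=e^{-x/\lambda}$ one has $\mu=\lambda$ and $\mathbb{E}[\min(X_{1},X_{2},X_{3})]=\lambda/3$, hence $\Delta(F)=\mu-3\mathbb{E}[\min(X_{1},X_{2},X_{3})]=0$ and therefore $\Delta^{*}(F)=\Delta(F)/\mu=0$. Consequently the centering in $\sqrt{n}\,\widehat{\Delta}=\sqrt{n}(\widehat{\Delta}-\Delta(F))$ is already present, and the previous corollary gives $\sqrt{n}\,\widehat{\Delta}\xrightarrow{d}N\!\left(0,\tfrac{4}{5}\lambda^{2}\right)$ as $n\to\infty$.

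Finally, since $\bar X$ is the sample mean of i.i.d.\ integrable variables, the strong law of large numbers yields $\bar X\xrightarrow{a.s.}\lambda$, in particular $\bar X\xrightarrow{p}\lambda$. Applying Slutsky's theorem to the ratio $(\sqrt{n}\,\widehat{\Delta})/\bar X$ then gives $\sqrt{n}\,\widehat{\Delta}^{*}\xrightarrow{d}\tfrac{1}{\lambda}N\!\left(0,\tfrac{4}{5}\lambda^{2}\right)=N\!\left(0,\tfrac{4}{5}\right)$, which is the assertion; note that the factor $\lambda$ cancels, in agreement with the scale-invariance of $\widehat{\Delta}^{*}$. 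There is essentially no serious obstacle here: the only point requiring care is the bookkeeping identification $\widehat{\Delta}^{*}=\widehat{\Delta}/\bar X$ and the observation that $\Delta^{*}(F)=0$ under exponentiality, so that no additional recentering is needed before invoking Slutsky.
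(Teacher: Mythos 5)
Your proposal is correct and follows exactly the route the paper intends: the paper's entire justification is the one-line remark that the result follows from the preceding corollary ($\sqrt{n}\,\widehat{\Delta}\to N(0,\tfrac{4}{5}\lambda^{2})$) via Slutsky's theorem, and you simply fill in the details (that $\Delta(F)=\mu-3\mathbb{E}[\min(X_1,X_2,X_3)]=\lambda-3\cdot\lambda/3=0$ under $H_0$, and that $\bar X\to\lambda$ in probability). No gaps.
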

Hence, in the case of the asymptotic test, for small values of $n$, we reject null hypothesis $H_0$ in favor of the alternative hypothesis $H_1$ if 
\begin{equation*}
    \sqrt{\frac{5}{4}n}\,\,\, \widehat{\Delta}^*<-z_{\alpha},
\end{equation*}
where $z_\alpha$ is the upper $\alpha-$percentile of $N(0,1)$.
\begin{remark}
    Similarly, we reject the null hypothesis $H_0$ in favor of the alternative hypothesis $H_1$ that is, $F$ is DNDSE, if 
\begin{equation*}
    \sqrt{\frac{5}{4}n}\,\,\, \widehat{\Delta}^*>z_{\alpha},
\end{equation*}
where $z_\alpha$ is the upper $\alpha-$percentile of $N(0,1)$.
\end{remark}

\section{The case of censored observations}
\label{Censored:section}
    Occurrences of right-censored observations are frequently seen in the analysis of lifetime data. Various techniques address the issue of testing for exponentiality using censored samples. An alternative method involves replacing the distribution function with the Kaplan-Meier estimator in order to calculate the test statistic. In this technique, it is necessary to modify the metric used to quantify deviation from the null hypothesis in the presence of censored observations.
    Another method is the inverse probability censoring weighted scheme (IPCW), in which the censored data is adjusted by weighting it with the inverse of the survival function of the censoring variable provided by (\cite{koul1981regression}, \cite{rotnitzky2005inverse} and \cite{datta2010inverse}). In this discussion, we explore the approach to address instances of censorship.

Assume that we have randomly censored observations, meaning that the censoring times are unrelated to the lifetimes and occur independently. Let the observed data are $n$ independent and identical (i.i.d.) copies of $(X^*,\delta)$, with $X^*=min(X,C)$, where $C$ is the censoring time and $\delta=I(X\leq C)$. We investigate the testing problem mentioned in Section \ref{exact_test} based on $n$ i.i.d. observations $\{(X_i,\delta_i),\, 1\leq i\leq n\}$. Note that $\delta_i =0$ means that the $i$th object is censored by $C$, on the right and $\delta_i =1$ means $i$th object is not censored. We refer to \cite{koul1980testing} to define measure $\Delta(F)$ for censored observations. We refer to \cite{datta2010inverse} to get an estimator of $\Delta(F)$ with censored observation as 
\begin{equation}
     \widehat{\Delta}_c =\frac{6}{n(n-1)(n-2)}\sum_{1\leq i<j<k\leq n} \frac{h(X_i^*,X_j^*,X_k^*)\delta_i \delta_j \delta_k}{\widehat{K}_c(X_i^*) \widehat{K}_c(X_j^*)\widehat{K}_c(X_k^*)},
\end{equation}
where $\widehat{K}_c(X_i^*), \widehat{K}_c(X_j^*),\widehat{K}_c(X_k^*)$ are strictly positive with probability one and 
\begin{align*}h(X_1^*,X_2^*,X_3^*)=\frac{1}{3}\left( X_1^*+X_2^*+X_3^*-9X_1^*I(X_1^*<\min{(X_2^*,X_3^*)}) -9X_2^*I(X_2^*<\min{(X_1^*,X_3^*)})\right.\\ \left.
-9X_3^*I(X_3^*<\min{(X_1^*,X_2^*)}) \right ).
\end{align*}
Here, $\widehat{K}_c$ is the Kaplan-Meier estimator of $K_c$, the survival function of the censoring variable $C$. In the same way, an estimator of $\mu$ is given by
\begin{equation}
    \widehat{X}_c=\frac{1}{n}\sum_{1}^{n}\frac{X_i^* \delta_i}{\widehat{K}_c(X_i^*)}.
\end{equation}
Therefore, in the right censoring situation, the test statistic is given by
\begin{equation}
    \widehat{\Delta}_c^*=\frac{\widehat{\Delta}_c}{\widehat{X}_c},
\end{equation}
and the test procedure is to reject null hypothesis $H_0$ in favor of $H_1$ for small values of $\widehat{\Delta}_c^*$.

For deriving the asymptotic distribution of $\widehat{\Delta}_c^*$, let us define $N_i^c(t) = I(X_i^* \le t, \delta_i = 0)$ as the counting process corresponding to the censoring random variable for the $i$-th subject and $R_i(u) = I(X_i^* \ge u)$. Let $\lambda_c(t)$ be the hazard rate of the censoring variable $C$. The martingale associated with the counting process \(N_i^c(t)\) is given by
\begin{equation}
M_i^c(t) = N_i^c(t) - \int_0^t R_i(u) \lambda_c(u) \, \mathrm{d}u.
\end{equation}

Let $G(x,y)=P(X_1\leq x, X_1^*\leq y, \delta_1=1)$, $x\in \mathbb{R}$, $H(t)=P(X_1^*\geq t)$ and 
\begin{equation}
    w(t) = \frac{1}{H(t)} \int_{\mathbb{R}\times [0,\infty)} \frac{h_1(x)}{K_c(y -)} I(y > t) \, \mathrm{d}G(x,y),
\end{equation}
where $h_1(x)=\mathbb{E}h(x,X_2^*,X_3^*)$. The next theorem follows from \cite{datta2010inverse} for the choice of the kernel 
\begin{align*}h(X_1^*,X_2^*,X_3^*)=\frac{1}{3}\left( X_1^*+X_2^*+X_3^*-9X_1^*I(X_1^*<\min{(X_2^*,X_3^*)}) -9X_2^*I(X_2^*<\min{(X_1^*,X_3^*)})\right.\\ \left.
-9X_3^*I(X_3^*<\min{(X_1^*,X_2^*)}) \right ),
\end{align*}
and under the assumption $\mathbb{E}h^2(X_1^*,X_2^*,X_3^*)<\infty$, 
\begin{equation*}
    \int_{\mathbb{R}\times [0,\infty)} \frac{h_1^2(x)}{K_c^2(y)}\, \mathrm{d}G(x,y)<\infty,
\end{equation*}
and 
\begin{eqnarray*}
    \int_{0}^{\infty} w^2(t)\lambda_c(t)\mathrm{d}t<\infty.
\end{eqnarray*}
\begin{theorem}
\label{theorem:censor}
    The distribution of $\sqrt{n}\left(\widehat{\Delta}_c-\Delta(F)\right)$, as $n\rightarrow \infty$, is Gaussian with mean zero and variance $9\sigma_{1c}^2$, where $\sigma_{1c}^2$ is given by
    \begin{eqnarray*}
        \sigma_{1c}^2=Var\left( \frac{h_1(X)\delta_1}{K_c(X^*)}+\int w(t)\,\mathrm{d}M_1^c(t)\right).
    \end{eqnarray*}
\end{theorem}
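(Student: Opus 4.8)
The plan is to obtain this result as a direct application of the general asymptotic theory for inverse-probability-of-censoring-weighted (IPCW) $U$-statistics developed in \cite{datta2010inverse}, which is precisely the framework set up in the paragraphs preceding the statement. First I would verify that the three regularity conditions listed just above the theorem — namely $\mathbb{E}h^2(X_1^*,X_2^*,X_3^*)<\infty$, the square-integrability of $h_1(x)/K_c(y)$ against $\mathrm{d}G(x,y)$, and $\int_0^\infty w^2(t)\lambda_c(t)\,\mathrm{d}t<\infty$ — are exactly the hypotheses required by the relevant theorem in \cite{datta2010inverse} for a degree-3 symmetric kernel. Since the kernel $h$ here is the same symmetric kernel used throughout Section \ref{exact_test} (merely evaluated at the observed minima $X_i^*$), these conditions are inherited, and one checks that $h$ is bounded by a linear function of its arguments so that a finite second moment of $X^*$ suffices.

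Next I would reproduce the asymptotic linearization of $\widehat{\Delta}_c$. The IPCW $U$-statistic differs from an ordinary $U$-statistic by the estimation error in the Kaplan--Meier estimator $\widehat{K}_c$ of the censoring survival function. The key step is the martingale representation: writing $\widehat{K}_c - K_c$ in terms of the martingales $M_i^c$ associated with the censoring counting processes $N_i^c$, one expands $\widehat{\Delta}_c$ to first order and collects terms. The Hájek-type projection of the $U$-statistic contributes the term $h_1(X)\delta_1/K_c(X^*)$ (the conditional expectation of the weighted kernel given one observation), while the plug-in error in $\widehat{K}_c$ contributes, after an interchange of integration and an application of the martingale structure, exactly the correction term $\int w(t)\,\mathrm{d}M_1^c(t)$ with $w$ as defined above. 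Summing these i.i.d. influence functions over $i$ and dividing by $\sqrt n$ yields, via the ordinary CLT (the martingale integral term is itself a mean-zero i.i.d.\ summand), asymptotic normality with mean zero.

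The variance is then simply the variance of the single influence function, and since the degree of the kernel is $3$ the classical factor of $3^2=9$ appears in front, giving $9\sigma_{1c}^2$ with $\sigma_{1c}^2 = \mathrm{Var}\big( h_1(X)\delta_1/K_c(X^*) + \int w(t)\,\mathrm{d}M_1^c(t)\big)$, matching the statement. I would remark that this reduces correctly to Theorem \ref{theorem_[var]} in the no-censoring limit $K_c\equiv 1$, $\delta_1\equiv 1$, where $w\equiv 0$ and $h_1(X)=\mathbb{E}h(X,X_2,X_3)$.

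The main obstacle I anticipate is not conceptual but bookkeeping: carefully justifying the asymptotic negligibility of the remainder terms in the two-stage expansion (the higher-order terms in $\widehat K_c - K_c$ and the difference between the $U$-statistic and its projection), and checking that the stated moment and integrability conditions are precisely what makes those remainders $o_P(n^{-1/2})$. Since all of this is already carried out at the required level of generality in \cite{datta2010inverse}, the honest write-up is to state that the theorem follows from their result upon substituting our kernel and verifying the hypotheses, which is the route I would take rather than re-deriving the martingale expansion from scratch.
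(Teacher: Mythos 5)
Your proposal is correct and follows essentially the same route as the paper: the paper offers no independent proof, stating only that the theorem follows from \cite{datta2010inverse} for this particular degree-3 kernel under the three regularity conditions listed immediately before the statement. Your additional sketch of the martingale linearization, the influence-function decomposition, and the sanity check against Theorem \ref{theorem_[var]} in the uncensored limit is a faithful (and somewhat more detailed) account of what that citation is doing.
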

\begin{corollary}
    Under the assumption of Theorem \ref{theorem:censor}, if $\mathbb{E}(X_1^2)<\infty$, the distribution of  $\sqrt{n}\left(\widehat{\Delta}_c^*-\Delta^*\right)$, as $n\rightarrow \infty$, is Gaussian with mean zero and variance $9\sigma_{c}^2$, where $\sigma_{c}^2$ is given by
    \begin{equation}
        \sigma_c^2=\frac{\sigma_{1c}^2}{\mu^2}.
    \end{equation}
\end{corollary}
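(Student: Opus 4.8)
The plan is to follow the same ratio decomposition used in the proof of Corollary~\ref{cor:assymtotic}, but with the IPCW quantities $\widehat{\Delta}_c$ and $\widehat{X}_c$ in place of $\widehat{\Delta}$ and $\bar X$, and with Theorem~\ref{theorem:censor} supplying the asymptotic normality of the numerator. First I would write
\[
\sqrt{n}\left(\widehat{\Delta}_c^{*}-\Delta^{*}\right)
=\frac{\sqrt{n}\left(\widehat{\Delta}_c-\Delta(F)\right)}{\widehat{X}_c}
-\frac{\Delta(F)}{\mu}\cdot\frac{\sqrt{n}\left(\widehat{X}_c-\mu\right)}{\widehat{X}_c}.
\]
Under $H_0$ the measure of departure vanishes, $\Delta(F)=0$, so the second term is identically $0$ and the whole fluctuation comes from the first term.

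The second ingredient is the weak consistency $\widehat{X}_c\xrightarrow{p}\mu$. Here $\widehat{X}_c$ is the inverse-probability-of-censoring-weighted estimator of the mean, and its consistency is a standard fact for IPCW estimators: it follows from the uniform consistency of the Kaplan--Meier estimator $\widehat{K}_c$ for the censoring survival function $K_c$ on the support of the $X_i^{*}$, together with the moment condition $\mathbb{E}(X_1^2)<\infty$, which ensures the relevant expectations are finite and the weighted average is uniformly integrable; this is the role of the extra moment assumption in the statement. The hypothesis, already imposed above, that $\widehat{K}_c(X_i^{*})$ is strictly positive with probability one keeps the summands of $\widehat{X}_c$ well defined.

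Finally I would combine these facts by Slutsky's theorem. Theorem~\ref{theorem:censor} gives $\sqrt{n}\left(\widehat{\Delta}_c-\Delta(F)\right)\xrightarrow{d} N(0,9\sigma_{1c}^{2})$, and $\widehat{X}_c\xrightarrow{p}\mu$, so dividing the convergent sequence by the consistent denominator yields
\[
\sqrt{n}\left(\widehat{\Delta}_c^{*}-\Delta^{*}\right)\xrightarrow{d}
N\!\left(0,\frac{9\sigma_{1c}^{2}}{\mu^{2}}\right)=N\!\left(0,9\sigma_c^{2}\right),
\qquad \sigma_c^{2}=\frac{\sigma_{1c}^{2}}{\mu^{2}},
\]
which is the claim. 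The one genuinely nontrivial step is the consistency of $\widehat{X}_c$ under random right censoring; the rest is the ratio bookkeeping already performed in Corollary~\ref{cor:assymtotic}. If one wanted the statement to hold off the null as well, the displayed decomposition shows that one would instead need the joint asymptotic normality of $(\widehat{\Delta}_c,\widehat{X}_c)$ and the bivariate delta method applied to $(u,v)\mapsto u/v$, whose gradient at $(\Delta(F),\mu)$ collapses to $(1/\mu,0)$ precisely when $\Delta(F)=0$, again producing $\sigma_c^{2}=\sigma_{1c}^{2}/\mu^{2}$.
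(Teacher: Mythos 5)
Your argument is correct and follows essentially the same route as the paper: the paper likewise establishes the consistency of $\widehat{X}_c$ for $\mu$ (by citing \cite{zhao2000estimating} rather than deriving it from the uniform consistency of the Kaplan--Meier estimator) and then concludes by Slutsky's theorem. Your explicit ratio decomposition and the closing remark --- that the second term vanishes only when $\Delta(F)=0$, so that off the null one needs the joint normality of $\bigl(\widehat{\Delta}_c,\widehat{X}_c\bigr)$ and the delta method --- make precise a point the paper's two-line proof glosses over, and are worth retaining.
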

\begin{proof}
    The consistency of the estimator $\widehat{X}_c$ for $\mu$ is proved by \cite{zhao2000estimating}. Therefore, the result follows from the above theorem by applying Slutsky's theorem. 
\end{proof}
\begin{corollary}
    Let $X$ be continuous non-negative random variable with $\bar{F}(x)=e^{-\frac{x}{\lambda}}$. Under the assumption of Theorem \ref{theorem:censor}, if $\mathbb{E}(X^2)<\infty$, the distribution of  $\sqrt{n}\, \widehat{\Delta}_c^*$, as $n\rightarrow \infty$, is Gaussian random variable with mean zero and variance $\sigma_{c0}^{2}$, where
    \begin{equation}
    \label{eqn:censoring:var}
        \sigma_{c0}^{2}=\frac{9}{\lambda^2}Var\left(\frac{(X+\frac{9}{2}\lambda\bar{F}^2(X))\delta_1}{3K_c(X^*-)}+\int w(t)\,\mathrm{d}M_1^c(t)\right).
    \end{equation}
\end{corollary}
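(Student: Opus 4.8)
The plan is to specialize the preceding corollary --- under which $\sqrt{n}\,(\widehat{\Delta}_c^{*}-\Delta^{*})$ is asymptotically $N(0,\,9\sigma_c^{2})$ with $\sigma_c^{2}=\sigma_{1c}^{2}/\mu^{2}$ --- to the exponential family $\bar{F}(x)=e^{-x/\lambda}$, and then to make the projection term $h_1$ explicit. First I would observe that under $H_0$ the departure measure vanishes, $\Delta(F)=\mu-3\,\mathbb{E}[\min(X_1,X_2,X_3)]=\lambda-3(\lambda/3)=0$, so that $\Delta^{*}=0$ and $\sqrt{n}\,\widehat{\Delta}_c^{*}$ is exactly the quantity handled by the preceding corollary; since $\mu=\lambda$ here, the claim reduces to identifying $\sigma_{1c}^{2}=Var\!\left(h_1(X)\delta_1/K_c(X^{*}-)+\int w\,\mathrm{d}M_1^{c}\right)$ under $H_0$, after which $\sigma_{c0}^{2}=9\sigma_{1c}^{2}/\lambda^{2}$.

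Next I would compute $h_1$. From the proof of Theorem \ref{theorem_[var]}, $h_1(x)=\mathbb{E}[h(x,X_2,X_3)]=\frac{1}{3}\left(x+2\mu-9x\bar{F}^{2}(x)-18\int_{0}^{\infty}\left(\int_{0}^{\min(x,z)}y\,\mathrm{d}y\right)\mathrm{d}F(z)\right)$, and substituting the exponential value of that double integral obtained when establishing $\sigma_0^{2}=\frac{4}{5}\lambda^{2}$, namely $-\frac{1}{2}xe^{-2x/\lambda}-\frac{1}{4}\lambda e^{-2x/\lambda}+\frac{1}{4}\lambda$, the two $9xe^{-2x/\lambda}$ contributions cancel and I obtain $h_1(x)=\frac{1}{3}\left(x+\frac{9}{2}\lambda\,\bar{F}^{2}(x)\right)-\frac{5}{6}\lambda$. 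Thus $h_1$ agrees with the quantity $\frac{1}{3}\left(X+\frac{9}{2}\lambda\bar{F}^{2}(X)\right)$ appearing in the claimed formula up to the additive constant $-\frac{5}{6}\lambda$.

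The one substantive step is then to show that this constant leaves $\sigma_{1c}^{2}$ unchanged. If $h_1$ is replaced by $h_1+c$, then $w$ is replaced by $w+c\,v$ with $v(t)=\frac{1}{H(t)}\mathbb{E}\left[I(X_1^{*}>t)\delta_1/K_c(X_1^{*}-)\right]$; using independence of the censoring and $H(t)=\bar{F}(t)K_c(t-)$ one computes $\mathbb{E}[I(X_1^{*}>t)\delta_1/K_c(X_1^{*}-)]=\bar{F}(t)$, hence $v(t)=1/K_c(t-)$, and then $\frac{1}{K_c(X^{*}-)}-1=\int_{0}^{X^{*}}\frac{\lambda_c(t)}{K_c(t-)}\,\mathrm{d}t$ yields the identity $\frac{\delta_1}{K_c(X^{*}-)}+\int \frac{1}{K_c(t-)}\,\mathrm{d}M_1^{c}(t)\equiv 1$ almost surely. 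Consequently adding $c$ to $h_1$ only shifts the influence function $h_1(X)\delta_1/K_c(X^{*}-)+\int w\,\mathrm{d}M_1^{c}$ by the constant $c$ and does not change its variance, so $\sigma_{1c}^{2}=Var\!\left(\frac{(X+\frac{9}{2}\lambda\bar{F}^{2}(X))\delta_1}{3K_c(X^{*}-)}+\int w\,\mathrm{d}M_1^{c}\right)$; substituting $\mu=\lambda$ into $\sigma_{c0}^{2}=9\sigma_{1c}^{2}/\lambda^{2}$ then gives the stated expression, the hypothesis $\mathbb{E}(X^{2})<\infty$ (trivially true here) being what makes Theorem \ref{theorem:censor} and the preceding corollary applicable.

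I expect the main obstacle to be exactly this last step: recognizing that the constant part of $h_1$ is annihilated by the joint structure of $\delta_1/K_c(X^{*}-)$ and the martingale correction through the IPCW identity above. Everything else is routine substitution of $\bar{F}(x)=e^{-x/\lambda}$ into formulas already established earlier in the paper.
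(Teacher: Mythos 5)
Your proposal is correct and follows the route the paper intends: specialize the preceding corollary to $\bar{F}(x)=e^{-x/\lambda}$, note $\Delta^*=0$ and $\mu=\lambda$, and substitute the null-hypothesis projection $h_1$. The paper itself states this corollary with no proof, and the one point it silently glosses over is exactly the one you isolate: the exact projection is $h_1(x)=\tfrac{1}{3}\bigl(x+\tfrac{9}{2}\lambda\bar{F}^2(x)\bigr)-\tfrac{5}{6}\lambda$, so the displayed variance formula drops an additive constant, and your verification via the IPCW identity $\tfrac{\delta_1}{K_c(X^*-)}+\int K_c(t-)^{-1}\,\mathrm{d}M_1^c(t)=1$ (a.s.) that this constant shifts the influence function $h_1(X)\delta_1/K_c(X^*-)+\int w\,\mathrm{d}M_1^c$ by a constant --- and hence leaves the variance unchanged --- is precisely the justification needed for the stated $\sigma_{c0}^2$ to agree with $9\sigma_{1c}^2/\mu^2$. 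Your computation is also consistent with the paper's subsequent remark that the formula collapses to $\tfrac{4}{5}$ when there is no censoring.
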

Note that, if all the observations are uncensored, then the second term in the variance expression \eqref{eqn:censoring:var} will be zero and the variance reduces to the null variance given in Corollary \ref{cor:assymtotic}. Therefore, in the case of right censoring, we reject null hypothesis $H_0$ in favor of $H_1$, if
\begin{equation}
    \frac{\sqrt{n}\widehat{\Delta}_c^*}{\widehat{\sigma}_{c0}}< -z_\alpha,
\end{equation}
where $\widehat{\sigma}_{c0}$ is a consistent estimator of $\sigma_{c0}$ and can be estimated using \eqref{eqn:censoring:var}.
\section{Simulation and data analysis}
\label{Simulation:section}
We calculate the power of our exact test using the Monte Carlo method by taking alternate distributions such as Weibull, gamma and lognormal distribution. We also compare the power of our proposed test with some other known tests for testing exponentiality. All the tables representing power for some alternative testing procedures are also included in this section.

\subsection{Power calculation}

\cite{zardasht2015empirical} provide a goodness-of-fit test for exponential distribution, where the test statistic is based on the well-known estimate of $\bar{F}$, Kaplan-Meier estimator. They reject $H_0$ in favor of $H_1$ at significant level $\alpha$ if 
\begin{equation}
    \sqrt{\frac{382n}{5}}|C_n-0.25|>z_{1-\frac{\alpha}{2}},
\end{equation}
where 
\begin{eqnarray*}
    C_n=\frac{1}{n}\sum_{i=1}^{n} \frac{X_i}{\bar{X}}e^{-\frac{X_i}{\bar{X}}}.
\end{eqnarray*}
\cite{baratpour2012testing} provide a test statistic for assessing the goodness-of-fit. This statistic is derived from a discrimination measure that is based on a modified version of the Kulback-Leibler information measure. This test aims to compare the hypothesis $H_0$ with the alternative hypothesis $H_1$. The test statistic is given by 
\begin{equation}
    T_n = \frac{\sum_{i=1}^{n-1}  \frac{n-i}{n} \left(\ln  \frac{n-i}{n} \right) \left( X_{(i+1)} - X_{(i)} \right)  + \frac{\sum_{i=1}^{n} X_i^2}{2 \sum_{i=1}^{n} X_i}}{\frac{\sum_{i=1}^{n} X_i^2}{2 \sum_{i=1}^{n} X_i}},
\end{equation}
where $X_{(i)}$ is the $i$th order statistic related to the sample and $H_0$ is rejected at significant level $\alpha$ if $T_n\geq T_{n,1-\alpha}$, where $T_{n,1-\alpha}$ is the $100(1-\alpha)$ percentile of $T_n$ under $H_0$. In addition, they conduct a Monte Carlo simulation analysis to compare the performance of $T_n$ with the statistic established by \cite{soest1969some}
\begin{equation}
    W^2 =  \sum_{i=1}^{n} \left[ F_0(X_{(i)}, \hat{\lambda}) - \frac{2i - 1}{2n} \right]^2 + \frac{1}{12n},
\end{equation}
the statistic proposed by \cite{finkelstein1971improved}
\begin{equation}
    S^* = \sum_{i=1}^{n}\max\left\{ \left| F_0(X_{(i)}, \hat{\lambda}) - \frac{i}{n} \right|, \left| F_0(X_{(i)}, \hat{\lambda}) - \frac{i - 1}{n} \right| \right\},
\end{equation}
where $F_0(x,\lambda)=1-e^{\frac{-x}{\lambda}}$, $\hat{\lambda}=\bar{X}=\frac{1}{n}\sum_{i=1}^{n}X_{i}$ and the statistic introduced by \cite{choi2004goodness}
\begin{equation}
    KLC_{mn} = \frac{\exp(C_{mn})}{\exp(\ln \bar{X} + 1)},
\end{equation}
where 
\begin{equation}
    C_{mn} = -\frac{1}{n} \sum_{i=1}^{n} \log \left( \frac{\sum_{j=i-m}^{i+m} (X_{(j)} - X_{(i)})(j-i)}{n \sum_{j=i-m}^{i+m} (X_{(j)} - \bar{X}_i)^2}\right),
\end{equation}
and 
\begin{equation}
    \bar{X}_i = \frac{1}{2m+1} \sum_{j=i-m}^{i+m} X_{(j)},
\end{equation}
which are proposed for testing $H_0$ against $H_1$. In $KLC_{mn}$ statistic, the window size $m$ is a positive integer less than $\frac{n}{2}$, $X_{(j)}=X_{(1)}$, if $j<1$, and $X_{(j)}=X_{(n)}$, if $j>n$. The null hypothesis $H_0$ is rejected for large values of $W^2$, $S^*$ and small values of $KLC_{mn}$. For simulation, we consider the following distribution functions and compare the empirical power of our test for each distribution.

\begin{table}[ht]
    \centering
    \renewcommand{\arraystretch}{1.3}
    \small
    \caption{Power comparison between the tests $\widehat{\Delta}^*$, $C_n$, $T_n$, $W^2$, $L^*$, $KLC_{mn}$ at the significance level
$\alpha = 0.05$, when the alternative distribution is Weibull }
\vspace{15pt}
    \begin{tabular}{m{1.5cm}  m{1.5cm} m{1.5cm} m{1.5cm} m{1.5cm} m{1.5cm} m{1.5cm} m{1.5cm}}
        \hline
        \textbf{$n$} & \textbf{$\beta$} & \textbf{$\widehat{\Delta}^*$} & \textbf{$C_n$} & \textbf{$T_n$} & \textbf{$W^2$} & \textbf{$S^*$} & \textbf{$KLC_{mn}$} \\
        \hline
  5 & 1 & 0.0520 & 0.0422 & 0.0502 & 0.0497 & 0.0488 & 0.0499\\
    & 2 &0.3768 &  0.2363& 0.3456& 0.2916& 0.2987& 0.3682\\
    & 3 &0.7330 & 0.5731& 0.7071& 0.6429& 0.6586& 0.7324\\
    & 4 & 0.9078 & 0.8188& 0.9106& 0.8663& 0.8801& 0.9190\\
  10& 1 &0.0542 &  0.0430& 0.0515& 0.0502& 0.0498& 0.0513\\
    & 2 &0.7552 &  0.6145& 0.6557& 0.6099& 0.6260& 0.7557\\
    & 3 &0.9885 & 0.9696& 0.9799& 0.9677& 0.9752& 0.9930\\
    & 4 &0.9997 & 0.9990& 0.9997& 0.9991& 0.9994& 0.9999\\
  15& 1 &0.0499 & 0.0451& 0.0502& 0.0502& 0.0502& 0.0505\\
    & 2 &0.9260 & 0.8446& 0.8259& 0.8244& 0.8421& 0.9216\\
    & 3 &0.9999 & 0.9988& 0.9991& 0.9986& 0.9990& 0.9998\\
    & 4 &1.0000 & 1.0000& 1.0000& 1.0000& 1.0000& 1.0000\\
  20& 1 &0.0472 & 0.0448& 0.0506& 0.0530& 0.0520& 0.0524\\
    & 2 &0.9785 & 0.9429& 0.9175& 0.9326& 0.9404& 0.9790\\
    & 3 &1.0000& 0.9999& 0.9999& 0.9999& 0.9999& 1.0000\\
    & 4 &1.0000& 1.0000& 1.0000& 1.0000& 1.0000& 1.0000\\
  25& 1 &0.0523 & 0.0461& 0.0518& 0.0484& 0.0502& 0.0510\\
    & 2 &0.9948 & 0.9827& 0.9644& 0.9759& 0.9807& 0.9940\\
    & 3 &1.0000 & 1.0000& 1.0000& 1.0000& 1.0000& 1.0000\\
    & 4 &1.0000& 1.0000& 1.0000& 1.0000& 1.0000& 1.0000\\
   \hline
    \end{tabular}
    \label{Table:Weibull}
\end{table}

We consider the probability density function for the Weibull distribution  as
\[
f(x; \lambda, \beta) = \frac{\beta}{\lambda} \left( \frac{x}{\lambda} \right)^{\beta-1} e^{-(\frac{x}{\lambda})^\beta}, \hspace{0.1cm} x > 0,\hspace{0.1cm} \beta>0, \hspace{0.1cm}\lambda>0,
\]
for the gamma distribution as
\[
f(x; \lambda, \beta) = \frac{1}{\Gamma(\beta)\lambda^\beta} x^{\beta-1} e^{-\frac{x}{\lambda}}, \hspace{0.1cm}  x > 0,\hspace{0.1cm} \beta>0, \hspace{0.1cm}\lambda>0,
\]
and for the lognormal distribution as
\[
f(x; \mu, \sigma^2) = \frac{1}{x \sigma \sqrt{2\pi}} \exp\left( -\frac{(\ln x - \mu)^2}{2\sigma^2} \right), \hspace{0.1cm} x > 0,\hspace{0.1cm} -\infty<\mu<\infty,\  \sigma>0.
\]

We refer \cite{baratpour2012testing} for generating  samples, they set the parameters for each distribution such that $$\frac{E(X_1^2)}{2E(X_1)} = 1,$$ that is,
\begin{itemize}
    \item For the Weibull distribution:
    \[
    \lambda = \frac{2 \Gamma(1 + \frac{1}{\beta})}{\Gamma(1 + \frac{2}{\beta})}
    \]
    \item For the gamma distribution:
    \[
    \lambda = \frac{2}{1 + \beta}
    \]
    \item For the lognormal distribution:
    \[
    \sigma^2 = \frac{2}{3} (\ln 2 - \mu)
    \]
\end{itemize}

We calculate the empirical power for each statistic using $10000$ produced samples with sample sizes of $n = 5,\, 10,\, 15,\, 20$ and $25$. We calculate the power as the ratio of the number of times the associated statistic exceeded the relevant threshold to 10000. The results are summarized in Tables \ref{Table:Weibull}, \ref{Table:Gamma} and \ref{Table:lognormal}.
\begin{table}[t]
    \centering
    \renewcommand{\arraystretch}{1.3}
    \small
    \caption{Power comparison between the tests $\widehat{\Delta}^*$, $C_n$, $T_n$, $W^2$, $L^*$, $KLC_{mn}$ at the significance level
$\alpha = 0.05$, when the alternative distribution is gamma}
\vspace{15pt}
    \begin{tabular}{m{1.5cm}  m{1.5cm} m{1.5cm} m{1.5cm} m{1.5cm} m{1.5cm} m{1.5cm} m{1.5cm}}
        \hline
        \textbf{$n$} & \textbf{$\beta$} & \textbf{$\widehat{\Delta}^*$} & \textbf{$C_n$} & \textbf{$T_n$} & \textbf{$W^2$} & \textbf{$S^*$} & \textbf{$KLC_{mn}$} \\
        \hline
           5 & 1 &0.0529 & 0.0419 & 0.0501 & 0.0501 & 0.0494 & 0.0493 \\
         & 5 &0.6194& 0.4079 & 0.5162 & 0.4874 & 0.4886 & 0.5696 \\
        & 6 & 0.6989 & 0.5030 & 0.6058 & 0.5848 & 0.5841 & 0.6662\\
         & 7 & 0.7831 &0.5878 & 0.6826 & 0.6676 & 0.6654 & 0.7443\\
        10 & 1 &0.0471 &  0.0436 & 0.0490 & 0.0499 & 0.0506 & 0.0497  \\
        & 5 &0.9656 & 0.9153 & 0.8334 & 0.8927 & 0.8901 & 0.9403\\
        & 6 &0.9907 & 0.9686 & 0.9096 & 0.9547 & 0.9526 & 0.9777\\
        & 7 & 0.9977 &0.9896 & 0.9535 & 0.9828 & 0.9803 & 0.9919\\
         15 & 1 &0.0505 & 0.0432 & 0.0510 & 0.0516 & 0.0500 & 0.0504\\
        & 5 &0.9981 & 0.9940 & 0.9462 & 0.9883 & 0.9862 & 0.9929 \\
        & 6 & 1.0000 &0.9994 & 0.9823 & 0.9983 & 0.9977 & 0.9986\\
        & 7 & 1.0000 &0.9999 & 0.9939 & 0.9996 & 0.9995 & 0.9997\\
         20 & 1 & 0.0489 &0.0424 & 0.0513 & 0.0492 & 0.0485 & 0.0506 \\
        & 5 &1.0000 & 0.9997 & 0.9836 & 0.9989 & 0.9987 & 0.9987\\
        & 6 & 1.0000 &1.0000 & 0.9966 & 1.0000 & 0.9999 & 0.9998 \\
        
        & 7 &1.0000 & 1.0000 & 0.9995 & 1.0000 & 1.0000 & 1.0000\\
         25 & 1 &0.0513 & 0.0462 & 0.0501 & 0.0505 & 0.0512 & 0.0497 \\
        & 5 & 1.0000 &1.0000 & 0.9955 & 0.9999 & 0.9999 & 0.9997\\
        & 6 &1.0000 & 1.0000 & 0.9994 & 1.0000 & 1.0000 & 1.0000\\
         & 7 & 1.0000& 1.0000 & 1.0000 & 1.0000 & 1.0000 & 1.0000\\
        \hline
    \end{tabular}
    \label{Table:Gamma}
\end{table}
As we know, Weibull and gamma distributions reduce to exponential distribution for $\beta=1$. We observe that the power values for $\beta=1$ in Table \ref{Table:Weibull} and Table \ref{Table:Gamma}, is in fact the simulated level, which is close to the significance level $0.05$. One can see from the table that the power of all tests against any alternative increases with the sample size, which shows the consistency of the tests. We observe that our proposed test performs better than the alternative test statistics $C_n$, $T_n$, $W^2$, $L^*$, we also observe that our test power is similar to $KLC_{mn}$,  but our test has an added advantage of having a simple form and a known asymptotic distribution. We also have an exact distribution of our test statistic, which provides more attention to our test.

\begin{table}[t]
    \centering
    \renewcommand{\arraystretch}{1.3}
    \small
    \caption{Power comparison between the tests $\widehat{\Delta}^*$, $C_n$, $T_n$, $W^2$, $L^*$, $KLC_{mn}$ at the significance level
$\alpha = 0.05$, when the alternative distribution is lognormal}
\vspace{15pt}
    \begin{tabular}{m{1.5cm}  m{1.5cm} m{1.5cm} m{1.5cm} m{1.5cm} m{1.5cm} m{1.5cm} m{1.5cm}}
        \hline
        \textbf{$n$} & \textbf{$\mu$} & \textbf{$\widehat{\Delta}^*$} & \textbf{$C_n$} & \textbf{$T_n$} & \textbf{$W^2$} & \textbf{$S^*$} & \textbf{$KLC_{mn}$} \\
        \hline
           5 & 0.4 & 0.4392 & 0.6724 & 0.5203 & 0.5232 & 0.5150 & 0.6022 \\
     & 0.5 & 0.6592 & 0.8568 & 0.7136 & 0.7365 & 0.7203 & 0.8022 \\
     & 0.6 & 0.9478 & 0.9923 & 0.9478 & 0.9699 & 0.9588 & 0.9828 \\
        10 & 0.4 & 0.9327 & 0.9750 & 0.7962 & 0.9248 & 0.9002 & 0.9328 \\
         & 0.5 & 0.9959 & 0.9994 & 0.9497 & 0.9950 & 0.9896 & 0.9932 \\
         & 0.6 & 1.0000 & 1.0000 & 0.9997 & 1.0000 & 1.0000 & 1.0000 \\
        15 & 0.4 & 0.9959 & 0.9984 & 0.9046 & 0.9939 & 0.9899 & 0.9851 \\
         & 0.5 & 1.0000 & 0.9993 & 0.9905 & 1.0000 & 0.9999 & 0.9994 \\
     & 0.6 & 1.0000 & 1.0000 & 1.0000 & 1.0000 & 1.0000 & 1.0000 \\
        20 & 0.4 & 0.9999 & 0.9990 & 0.9557 & 0.9997 & 0.9992 & 0.9932 \\
         & 0.5 & 1.0000 & 1.0000 & 0.9983 & 1.0000 & 1.0000 & 0.9999 \\
         & 0.6 & 1.0000 & 1.0000 & 1.0000 & 1.0000 & 1.0000 & 1.0000 \\
        25 & 0.4 & 1.0000 & 0.9990 & 0.9791 & 1.0000 & 1.0000 & 0.9970 \\
         & 0.5 & 1.0000 & 1.0000 & 0.9997 & 1.0000 & 1.0000 & 1.0000 \\
         & 0.6 & 1.0000 & 1.0000 & 1.0000 & 1.0000 & 1.0000 & 1.0000 \\
        \hline
    \end{tabular}
    \label{Table:lognormal}
\end{table}
\subsection{Data Analysis}
In this part, we provide two numerical instances using real-life data to demonstrate the application of the test statistic $\widehat{\Delta}^*$ for evaluating the suitability of an exponential distribution for fitting a given data set.

\begin{example}
    In a research conducted by \cite{proschan1963theoretical}, data was collected on the time, measured in hours of operation, between consecutive failures of air-conditioning equipment in 13 aircraft. The purpose of the study was to investigate the aging properties of the equipment. The data about plane number 3 is as follows:

 90, 10, 60, 186, 61, 49, 14, 24, 56, 20, 79, 84, 44, 59, 29, 118, 25, 156, 310, 76, 26, 44, 23, 62, 130, 208, 70, 101, 208.
\end{example}
We calculate the value of test statistic $\widehat{\Delta}^*$ and it gives $\widehat{\Delta}^*=-0.2352$, which is greater than the critical value for a sample with 29 points, that is, $-0.2891$. The standard normal approximation of $\sqrt{\frac{(5)(29)}{4}}\, \widehat{\Delta}^*$ gives the P-value $0.0783$. Thus, the test does not reject the null hypothesis that the failure time follows an exponential distribution at significance level $\alpha=0.05$. \cite{lawless2011statistical} also obtained the same result for the above data using some other tests.

\begin{example}
    This dataset provides the failure points, measured in thousands of kilometers, of various locomotive controls in a life-testing experiment that included 96 controls. The experiment was concluded after covering a distance of 135000 miles, and a total of 37 failures were recorded. The failure times of the malfunctioning devices are expressed in thousands of miles. We refer to \cite{lawless2011statistical} to get the following data.
    
\[
\begin{array}{cccccc}
22.5 & 37.5 & 46.0 & 48.5 & 51.5 & 53.0 \\
54.5 & 57.5 & 66.5 & 68.0 & 69.5 & 76.5 \\
77.0 & 78.5 & 80.0 & 81.5 & 82.0 & 83.0 \\
84.0 & 91.5 & 93.5 & 102.5 & 107.0 & 108.5 \\
112.5 & 113.5 & 116.0 & 117.0 & 118.5 & 119.0 \\
120.0 & 122.5 & 123.0 & 127.5 & 131.0 & 132.5 \\
134.0 \\
\end{array}
\]\end{example}

We calculate the value of test statistic $\widehat{\Delta}^*$ and it gives $\widehat{\Delta}^*=-1.0941$, which is very less than the critical value for a sample with 37 points, that is, $-0.2554$. The standard normal approximation of $\sqrt{\frac{5*37}{4}}\widehat{\Delta}^*$ gives the P-value $0.00005\times10^{-10}$ at a significance level $\alpha=0.05$. Thus, the test rejects the null hypothesis that the data follows an exponential distribution. \cite{lawless2011statistical} and \cite{dube2011parameter} demonstrated that the lognormal distribution is a very excellent fit for this data set. This validates our result, too. We also test the same with some other alternative tests given by \cite{baratpour2012testing} and \cite{zardasht2015empirical}, their test also reject the null hypothesis $H_0$ at a significance level $\alpha=0.05$.

\section{Conclusion}
We introduced a new dynamic version of CRJ as NDSE and using this, we demonstrate a simple goodness-of-fit test for testing exponentiality. We evaluate an exact null distribution of our proposed test statistic. We derive its asymptotic properties also. We also study how the proposed method considers censoring information. We extensively compare our test with other mentioned tests. Finally, some real-life examples are presented and the test procedure is illustrated using real data. 

\section{Conflict of interest}
No conflicts of interest are disclosed by the authors.

\section{Funding}
Gaurav Kandpal would like to acknowledge financial support from the University Grant Commission, Government of India (Student ID : 221610071232).

\end{document}